\newcommand{\op}[1]{{\ensuremath{\sf #1}}}
\newcommand{\opt}[1]{{\ensuremath{\widetilde{\sf #1}}}}
\newtheorem{theorem}{Theorem}
\newtheorem{proposition}[theorem]{Proposition}
\newenvironment{proof}[1][Proof]{\begin{trivlist}
\item[\hskip \labelsep {\textit{#1.}}]}{\end{trivlist}}
\newcommand{\qed}{\hfill \ensuremath{\Box}}
\begin{document}

\title[Classical Noether's theory with application to the linearly damped particle]{Classical Noether's theory with application to the linearly damped particle}

\author{Rapha\"{e}l Leone and Thierry Gourieux}

\address{Universit\'e de Lorraine, IJL, Groupe de Physique Statistique (UMR CNRS 7198)\\ 
F-54506 Vand\oe{}uvre-l\`es-Nancy cedex, France}

\ead{\href{mailto:thierry.gourieux@univ-lorraine.fr}{thierry.gourieux@univ-lorraine.fr}}

\begin{abstract}
This paper provides a modern presentation of Noether's theory in the realm of classical dynamics, with application to the problem of a particle submitted to both a potential and a linear dissipation. After a review of the close relationships between Noether symmetries and first integrals, we investigate the variational point symmetries of the Lagrangian introduced by Bateman, Caldirola and Kanai. This analysis leads to the determination of all the time-independent potentials allowing such symmetries, in the one-dimensional and the radial cases. Then we develop a symmetry-based transformation of Lagrangians into autonomous others, and apply it to our problem. To be complete, we enlarge the study to Lie point symmetries which we associate logically to Noether ones. Finally, we succinctly address the issue of a `weakened' Noether's theory, in connection with `on-flows' symmetries and non-local constant of motions, for it has a direct physical interpretation in our specific problem. Since the Lagrangian we use gives rise to simple calculations, we hope that this work will be of didactic interest to graduate students, and give teaching material as well as food for thought for physicists regarding Noether's theory and the recent developments around the idea of symmetry in classical mechanics.

\end{abstract}

\pacs{02.30.Hq,45.20.Jj,45.50.Dd}
\vspace{2pc}
\noindent{\it Keywords}: Classical mechanics, Noether symmetries, Lie symmetries, first integrals, non-conservative systems, Bateman-Caldirola-Kanai Lagrangian

\maketitle

\section{Introduction}

In Lagrangian mechanics, a primary method for taking into account non-conservative forces is to add their generalized counterparts in the Euler-Lagrange equations~\cite{Whittaker,Goldstein}. But by doing so, one loses the variational framework on which the `wonderful'~\cite{Neuenschwander} Noether's theory~\cite{Noether,Sarlet_Cantrijn} relies naturally. Then, if we are interested in the study of conservation laws, one might use alternative techniques, focused on the equations of motion. Among the most elegant are Lie's theory~\cite{SophusLie,Olver,IbragimovReview} and $\lambda$-symmetries~\cite{Muriel_Romero2001,Muriel_Romero2003}, Jacobi last multipliers~\cite{Jacobi1842,Jacobi1866,Whittaker,Nucci2008bis}, or methods based upon integrating factors~\cite{Duarte,Anco_Bluman,Cheb-Terrab}. They are highly interconnected~\cite{Nucci2005,Muriel_Romero2009,Mohanasubha,Nucci2013} and each of them stands more or less on some symmetry concept. A second, albeit related, approach consists in trying to find a `non-standard' Lagrangian encoding all the dynamics so as to recover a genuine variational framework on which Noether's theory can be applied. That demands to obtain a solution to the inverse Lagrange problem, as first posed by Helmholtz in 1887~\cite{Helmholtz,Whittaker,Havas}, e.g. by seeking a Jacobi last multiplier~\cite{Nucci2008}. However, solutions may be not tractable, or none may exist, or even the resulting Lagrangians may be very cumbersome. A last possibility is to extend Noether's theory to the primary formulation~\cite{Djukic_Vujanovic}, or at the level of d'Alembert's principle~\cite{Vujanovic}.

Fortunately, after the seminal works of Bateman~\cite{Bateman}, Caldirola~\cite{Caldirola}, and Kanai~\cite{Kanai} (BCK), the problem of a particle submitted to both a potential $V$ and a linear drag force $\bi f=-2m\gamma\bi v$, where $m$ is the particle's mass and $\gamma$ the dissipation rate, is known to admit the quite simple Lagrangian
\begin{equation}
L_{\mathrm{BCK}}(\bi r,\bi v,t)=\left(\frac12\,m\,\bi v^2-V\right)\rme^{2\gamma t}.\label{BCK}
\end{equation}
Indeed, the corresponding Euler-Lagrange equations are, in vector notation,
\begin{equation*}
\boldsymbol{\op E}(L_{\rm BCK})=\left(m\,\frac{\rmd\bi v}{\rmd t}+\frac{\partial V}{\partial\bi r}+2m\gamma\bi v\right)\rme^{2\gamma t}=\mathbf 0.
\end{equation*}
Since the weight factor $\rme^{2\gamma t}$ does never vanish during time, the above equations give a faithful account of the dynamics\footnote{The BCK Lagrangian can be adapted to a time-dependent dissipation rate; it suffices to replace $\gamma t$ by $\int\gamma(t)\,\rmd t$ in the exponential factor of~\eref{BCK}.}. This Lagrangian has interesting properties: (i) it is close in form and tends to the standard conservative Lagrangian $T-V$ when the dissipation rate $\gamma$ goes to zero; (ii) it is universally valid for all potentials. Those are among the reasons why, for decades, the BCK Lagrangian has attracted attention~\cite{Denman1966,Denman1972,Lemos1979,Bahar1981,Kobe1986,Rosales1989,Razavy2005} and why it has become a kind of paragon for the thorny question of canonical quantization of non-conservative systems~\cite{Kanai,Kerner1958,Senitzky1960,Dekker1981,Baldiotti2011}, first and foremost addressed within the ubiquitous harmonic oscillator. In this article, however, we will not enter this epistemological debate. We will remain at the classical standpoint: our purpose here is to analyse the variational symmetries of $L_{\rm BCK}$ and the conservation laws they induce.

The paper is organized as follows. We give in \sref{sec:Noetherthm} an overview of Noether's theorem in classical mechanics and highlight the intimate connections between Noether symmetries and first integrals. Then, we begin \sref{sec:NPS} by seeking Noether point symmetries (NPS) of the one-dimensional BCK Lagrangian. It brings us to all the time-independent potentials admitting at least one NPS. They are divisible into four classes, including the very special family of at most quadratic polynomials which are known to possess the maximal number of independent NPS: 5~\cite{Mahomed_Kara_Leach}. As a second step, we make an excursion into the central problems in three dimensions. On the one hand, rotational symmetry about the center of force emerges naturally, but, on the other hand, we show that the angular momentum breaks some of the symmetries previously identified in the unidimensional case. 

\Sref{sec:autonomous} is devoted to symmetry-based transformations of Lagrangians into autonomous others. By construction, the latter lead to Hamiltonians which coincide with the first integrals generated by the symmetries. More precisely, we develop a systematic scheme for carrying out such a mapping from the existence of a NPS, and apply it to the BCK Lagrangian. In \sref{sec:Lie}, we enlarge the study to Lie point symmetries (LPS) of the Euler-Lagrange equations. We show that the time translation invariance is the only additional point symmetry, with the notable exception of potentials at most quadratic. Indeed, they reach a LPS algebra of maximal dimension: 8~\cite{Mahomed_Kara_Leach}. It is \textit{en passant} the opportunity to give the eight first integrals associated to a basis of those algebra for the linear potential, which we did not find in the literature. Then, thanks to the converse of Noether's theorem, we associate to each of these additional symmetries a Noether one. Finally, we discuss in \sref{sec:action} a weak version of Noether's theorem, in terms of which any first integral amounts to a local expression of the BCK action functional along the solution curves.

\section{Noether's theorem in classical mechanics}\label{sec:Noetherthm}

\subsection{The general statement of Noether's theorem; Killing-type equations}

Let us consider a dynamical system governed by a regular Lagrangian $L$ in terms of $n$ generalized coordinates $q^i$. Hamilton's variational principle applied to the action functional
\begin{equation*}
A:=\int L(q,\dot q,t)\,\rmd t
\end{equation*}
yields a set of $n$ second-order differential equations $\op E_i(L)=0$, where
\begin{equation*}
\op E_i:=\frac{\partial}{\partial q^i}-\frac{\rmd}{\rmd t}\frac{\partial}{\partial\dot q^i}
\end{equation*}
is the Euler-Lagrange operator associated to $q^i$. Under the regularity assumption of $L$, one can isolate the accelerations $\ddot q^{\,i}$ to bring the equations to the normal form
\begin{equation*}
\ddot q^{\,i}=\Omega^i(q,\dot q,t).
\end{equation*}
Noether's theorem ensues from a variational principle involving the action functional as well. Here, the key element is its variations under infinitesimal transformations of the coordinates and time. The most general ones we will consider read formally
\begin{equation}
q^i\longrightarrow\tilde q^{\,i}=q^i+\varepsilon\,\xi^i(q,\dot q,t)\quad,\quad t\longrightarrow\tilde t=t+\varepsilon\,\tau(q,\dot q,t),\label{transformation}
\end{equation}
where $\varepsilon$ is the infinitesimal parameter, the functions $\tau$, $\xi^i$ being smooth with respect to their arguments. We are dealing with the very familiar point transformations when these functions do not depend on the velocities. Adopting Einstein's summation convention on repeated indices, let us introduce the generator of the transformation~\eref{transformation}
\begin{equation}
\op X:=\tau\,\frac{\partial}{\partial t}+\xi^i\,\frac{\partial}{\partial q^i}\,,\label{generateur}
\end{equation}
which allows to write its natural action on any differentiable function $G(q,t)$ as
\begin{equation*}
G(q,t)\longrightarrow G(\tilde q,\tilde t)=G(q,t)+\varepsilon\,\op X(G(q,t))+\Or(\varepsilon^2),
\end{equation*}
after a first order Taylor expansion. The transformation maps any curve $t\mapsto q(t)$ to a curve $\tilde t\mapsto\tilde q(\tilde t)$ and affects the velocities according to
\begin{equation*}
\dot q^i\longrightarrow\frac{\rmd \tilde q^{\,i}}{\rmd\tilde t}=\frac{\rmd \tilde q^{\,i}/\rmd t}{\rmd\tilde t/\rmd t}=\dot q^i+ \varepsilon(\dot\xi^i-\dot q^i\dot\tau)+\Or(\varepsilon^2).
\end{equation*}
By extension, the effect of~\eref{transformation} on velocity-dependent functions $G(q,\dot q,t)$ becomes
\begin{equation}
G(q,\dot q,t)\longrightarrow G\Big(\tilde q,\frac{\rmd\tilde q}{\rmd\tilde t},\tilde t\Big)=G(q,\dot q,t)+\varepsilon\,\op X^{[1]}(G(q,\dot q,t))+\Or(\varepsilon^2)\label{extension}
\end{equation}
where
\begin{equation*}
\op X^{[1]}:=\op X+(\dot\xi^i-\dot q^i\dot\tau)\frac{\partial}{\partial\dot q^i}
\end{equation*}
is the first prolongation of the generator. Successive prolongations $\op X^{[n]}$ can be deduced recursively to act on dynamical functions of $t$, $q$, $\dot q$, $\ddot q$, and so forth until the $n$-th time-derivative of $q$ (in \sref{sec:Lie} we will use the second prolongation). Now, the effect of the transformation~\eref{transformation} on the action functional is evaluated via the variation
\begin{equation}
\delta A=\int_{\tilde t_1}^{\tilde t_2}L\left(\tilde q,\frac{\rmd\tilde q}{\rmd\tilde t},\tilde t\right)\rmd\tilde t-\int_{t_1}^{t_2}L(q,\dot q,t)\,\rmd t
\end{equation}
for any path $t\mapsto q(t)$ in the configuration space, between two arbitrary instants $t_1$ and $t_2$. Then, it is straightforward to derive from~\eref{extension} the formula
\begin{equation}
\delta A=\varepsilon\int_{t_1}^{t_2}\Big(\op X^{[1]}(L)+\dot\tau L\Big)\,\rmd t+\Or(\varepsilon^2).\label{deltaA}
\end{equation}
We say that, under the transformation~\eref{transformation}, the functional is \emph{invariant up to a divergence term} $f$ if the integrand in~\eref{deltaA} is the total time derivative of some function $f(q,\dot q,t)$~\cite{Bessel-Hagen,Sarlet_Cantrijn}:
\begin{equation}
\op X^{[1]}(L)+\dot\tau L=\dot f.\label{RT1}
\end{equation}
Equation~\eref{RT1} is the so-called Rund-Trautman identity~\cite{Rund,Trautman}. After some algebra, it can be re-written
\begin{equation}
(\xi^i-\dot q^i\tau)\op E_i(L)+\frac{\rmd}{\rmd t}\left[L\tau+\frac{\partial L}{\partial\dot q^i}(\xi^i-\dot q^i\tau)\right]=\dot f\,.\label{RT2}
\end{equation}
Noether's theorem follows directly from~\eref{RT2}.

\begin{theorem}[Noether]
If the action functional is invariant under the
infinitesimal transformation~\eref{transformation}, up to the divergence term $f$, then the quantity
\begin{equation}
I(q,\dot q,t)=f-L\tau-\frac{\partial L}{\partial\dot q^i}(\xi^i-\dot q^i\tau)\label{FINoether}
\end{equation}
is a first integral of the problem.
\end{theorem}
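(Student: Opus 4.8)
The plan is to read Noether's theorem straight off the rewritten Rund-Trautman identity~\eref{RT2}, which already contains all the structural work. The key observation is that equation~\eref{RT2} is an \emph{identity} holding for arbitrary curves $t\mapsto q(t)$, not merely for solutions of the equations of motion; the passage to solutions is precisely what converts it into a conservation law.

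First I would invoke the hypothesis: the assumption that the action is invariant up to the divergence term $f$ is exactly the Rund-Trautman identity~\eref{RT1}, and hence its equivalent form~\eref{RT2}. I would take the latter as the starting point, so that the only thing left to do is to evaluate it on the dynamical trajectories. Second, I would restrict attention to an actual motion of the system, i.e.\ a curve satisfying the Euler-Lagrange equations $\op E_i(L)=0$. On such a curve the first term $(\xi^i-\dot q^i\tau)\op E_i(L)$ in~\eref{RT2} vanishes identically, because every factor $\op E_i(L)$ is zero. Third, what survives is
\begin{equation*}
\frac{\rmd}{\rmd t}\left[L\tau+\frac{\partial L}{\partial\dot q^i}(\xi^i-\dot q^i\tau)\right]=\dot f,
\end{equation*}
which I would rearrange as $\rmd I/\rmd t=0$ with $I$ given by~\eref{FINoether}, establishing that $I$ is constant along every solution, i.e.\ a first integral.

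There is essentially no analytic obstacle here: the heavy lifting — the integration by parts and the manipulation of the prolonged generator that turns~\eref{RT1} into~\eref{RT2} — has already been carried out in the text preceding the statement. The only point that deserves care, and which I regard as the conceptual crux rather than a computational difficulty, is to be explicit about the logical status of~\eref{RT2}: it is an off-shell identity, valid for all paths, and the theorem's conclusion arises exactly from imposing the on-shell condition $\op E_i(L)=0$. Stating this distinction clearly is what makes the one-line computation a genuine proof rather than a restatement of the hypothesis. \qed
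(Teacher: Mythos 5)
Your proposal is correct and follows exactly the route the paper intends: the paper simply remarks that the theorem ``follows directly from~\eref{RT2}'', and your argument spells out that one-line deduction --- imposing $\op E_i(L)=0$ kills the first term of~\eref{RT2}, leaving $\rmd I/\rmd t=0$ with $I$ as in~\eref{FINoether}. Your emphasis on the off-shell versus on-shell distinction is a welcome clarification but does not constitute a different approach.
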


Under these circumstances, we say that the transformation is a \emph{Noether symmetry} (or \emph{variational symmetry}) of the problem, and that it generates the first integral $I$. It is called \emph{strict} when $f=0$. Since a transformation is entirely characterized by its generator, and \textit{vice versa}, we will from now on identify these two notions and practically speak in terms of transformations or Noether symmetries $\op X$. The most familiar ones are the \emph{point symmetries}. For instance, any cyclic coordinate $q^i$ stands for the strict Noether point symmetry $\partial_{q^i}$ which generates the momentum $p_i=\partial_{\dot q^i}L$ as first integral. In the same way, $\partial_t$ is such a symmetry of autonomous Lagrangians from which follows the conservation of $\dot q^i\partial_{\dot q^i}L-L$ along the solution curves, viz. the Beltrami identity. 

Unfortunately, most of the symmetries cannot be brought to light by simply taking a look at the Lagrangian. The only way to unearth `hidden' symmetries is to seek solutions of the Rund-Trautman identity. We would like to insist upon the fact that~\eref{RT1} is assumed to hold for all the paths $t\mapsto q(t)$, not only along solution curves of the Euler-Lagrange equations (the actual paths): roughly speaking, we deal with `strong' solutions and not only `on-flow' ones~\cite{Gorni_Zampieri} (the nuance holds only for non-point transformations). Thus, to avoid any confusion inherent to the overdot notation, it would be preferable, at least for further theoretical considerations, to make use of the total time-derivative operator
\begin{equation}
\op D=\frac{\partial}{\partial t}+\dot q^i\frac{\partial}{\partial q^i}+\ddot q^{\,i}\frac{\partial}{\partial\dot q^i}+\dots\label{timederivative}
\end{equation}
and to identify $\dot G$ with $\op D(G)$ for any dynamical function $G$. Hence, equation~\eref{RT1} is definitely an identity in the variables $t$, $q^i$, $\dot q^i$, $\ddot q^{\,i}$. It is clearly linear in the $\ddot q^{\,i}$. Therefore, its coefficients must vanish separately, providing the $n+1$ `Killing-type' equations
\numparts
\begin{eqnarray}
\fl\frac{\partial\tau}{\partial\dot q^i}\left(L-\dot q^j\frac{\partial L}{\partial\dot q^j}\right)+\frac{\partial\xi^j}{\partial\dot q^i}\frac{\partial L}{\partial\dot q^j}=\frac{\partial f}{\partial\dot q^i}\;;\label{Killing1}\\
\fl\tau\frac{\partial L}{\partial t}+\xi^i\frac{\partial L}{\partial q^i}+\frac{\partial L}{\partial\dot q^i}\left(\dot q^j\frac{\partial\xi^i}{\partial q^j}+\frac{\partial\xi^i}{\partial t}\right)+\left(\dot q^i\frac{\partial\tau}{\partial q^i}+\frac{\partial\tau}{\partial t}\right)\left(L-\dot q^i\frac{\partial L}{\partial\dot q^i}\right)=\dot q^i\frac{\partial f}{\partial q^i}+\frac{\partial f}{\partial t}\,.\label{Killing2}
\end{eqnarray}
\endnumparts
By seeking the Noether point symmetries (NPS), the left-hand sides of the $n$  equations~\eref{Killing1} vanish, implying $f=f(q,t)$ and leaving us with the single equation~\eref{Killing2}. The latter is frequently solvable in a completely algorithmic way, as will be the case for $L_{\rm BCK}$ in \sref{sec:NPS}. 

Before entering further into the subject of Noether symmetries, let us end this paragraph with a comforting property ensuring their preservation under Lagrangian gauge transformations.

\begin{proposition}\label{lemme}
Let $\op X$ be a Noether symmetry of $L$, with divergence term $f$. Let $L\to \widehat L=L+\op D(\Lambda)$ be a gauge transformation induced by some function $\Lambda(q,t)$. Then, $\op X$ is a Noether symmetry of the new Lagrangian $\widehat L$, with divergence term $f+\op X(\Lambda)$, and generates the same first integral ($\widehat I=I$).
\end{proposition}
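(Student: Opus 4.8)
The plan is to verify directly that the Rund-Trautman identity~\eref{RT1} continues to hold for $\widehat L$ with the proposed divergence term $\widehat f=f+\op X(\Lambda)$, and then to check that formula~\eref{FINoether} returns an unchanged first integral. I would organise the computation around the alternate form~\eref{RT2} rather than~\eref{RT1}, since it cleanly separates the two ingredients that carry the argument: an Euler-Lagrange term and a total time derivative.

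First I would record two elementary facts about the gauge term $\op D(\Lambda)$. Because $\Lambda$ depends on $q$ and $t$ only, $\op D(\Lambda)=\partial_t\Lambda+\dot q^j\partial_{q^j}\Lambda$ is affine in the velocities, so that $\partial\op D(\Lambda)/\partial\dot q^i=\partial_{q^i}\Lambda$; and, being a total time derivative of a function of $(q,t)$, it is annihilated by every Euler-Lagrange operator, $\op E_i(\op D(\Lambda))=0$, as one sees at once from the symmetry of the mixed second derivatives of $\Lambda$. This is the familiar statement that a gauge transformation leaves the equations of motion untouched, whence $\op E_i(\widehat L)=\op E_i(L)$.

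Next I would substitute $\widehat L$ into the bracketed quantity of~\eref{RT2}. Writing $J:=L\tau+\partial_{\dot q^i}L\,(\xi^i-\dot q^i\tau)$ and $\widehat J$ for its analogue built from $\widehat L$, the two facts above give $\widehat J-J=\op D(\Lambda)\,\tau+\partial_{q^i}\Lambda\,(\xi^i-\dot q^i\tau)$. The crux of the whole argument is that this expression collapses: the terms proportional to $\dot q^i$ cancel pairwise, leaving exactly $\widehat J-J=\tau\,\partial_t\Lambda+\xi^i\,\partial_{q^i}\Lambda=\op X(\Lambda)$. Feeding this back into~\eref{RT2}, together with $\op E_i(\widehat L)=\op E_i(L)$ and the hypothesis that $\op X$ is a Noether symmetry of $L$ (which by~\eref{RT2} reads $(\xi^i-\dot q^i\tau)\op E_i(L)=\op D(f-J)$), I obtain $(\xi^i-\dot q^i\tau)\op E_i(\widehat L)+\op D(\widehat J)=\op D(f+\op X(\Lambda))$, i.e.\ precisely~\eref{RT2} for $\widehat L$ with divergence term $f+\op X(\Lambda)$. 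This establishes the first claim.

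The equality of first integrals is then immediate from the same cancellation: by~\eref{FINoether}, $\widehat I=\widehat f-\widehat J=(f+\op X(\Lambda))-(J+\op X(\Lambda))=f-J=I$. I do not expect any genuine obstacle here; the single delicate point is the velocity cancellation in $\widehat J-J$, which relies entirely on $\Lambda$ being velocity-independent (were $\Lambda$ to depend on $\dot q$, $\op D(\Lambda)$ would be quadratic in the velocities, both $\op E_i(\op D(\Lambda))=0$ and the clean collapse to $\op X(\Lambda)$ would fail). A fully equivalent alternative works straight from~\eref{RT1}: one proves the commutation-type identity $\op X^{[1]}(\op D(\Lambda))+\dot\tau\,\op D(\Lambda)=\op D(\op X(\Lambda))$ by direct prolongation, the only input beyond bookkeeping again being the symmetry of the mixed second derivatives of $\Lambda$.
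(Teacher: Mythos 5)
Your proof is correct, but it takes a genuinely different route from the paper's. The paper verifies the Rund--Trautman identity \eref{RT1} for $\widehat L$ head-on, the entire burden being carried by the commutation identity \eref{commutation}, $\op D(\op X(\Lambda))-\op X^{[1]}(\op D(\Lambda))=\op D(\tau)\op D(\Lambda)$ --- which is precisely the ``fully equivalent alternative'' you sketch in your closing sentence --- and then merely asserts that $\widehat I=I$ is straightforwardly checked. You instead work with the rewritten form \eref{RT2}, and your two ingredients are the gauge invariance of the Euler--Lagrange operator, $\op E_i(\widehat L)=\op E_i(L)$, and the collapse of the Noether current, $\widehat J-J=\tau\,\op D(\Lambda)+\partial_{q^i}\Lambda\,(\xi^i-\dot q^i\tau)=\op X(\Lambda)$; both computations are sound, and both arguments remain valid for velocity-dependent $\tau$ and $\xi^i$, as the proposition requires. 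What your decomposition buys is that the second claim comes for free: since $\widehat f=f+\op X(\Lambda)$ and $I=f-J$ by \eref{FINoether}, the equality $\widehat I=I$ is an immediate cancellation rather than a separate verification. What the paper's route buys in exchange is the identity \eref{commutation} itself, which is reused almost verbatim (as \eref{commutation2}) in \sref{sec:autonomous}. One small slip in an aside: if $\Lambda$ depended on $\dot q$, then $\op D(\Lambda)$ would not be quadratic in the velocities but would involve the accelerations $\ddot q^{\,i}$; your conclusion that the argument then breaks down is nevertheless right.
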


\begin{proof}
With $\op X$ given by~\eref{generateur}, it is a simple task to check the validity of
\begin{equation}
\op D(\op X(\Lambda))-\op X^{[1]}(\op D(\Lambda))=\op D(\tau)\op D(\Lambda),\label{commutation}
\end{equation}
for any function $\Lambda(q,t)$. Whence,
\begin{eqnarray*}
\op X^{[1]}(\widehat L)+\op D(\tau)\widehat L&=\op X^{[1]}(L)+\op D(\tau)L+\op X^{[1]}(\op D(\Lambda))+\op D(\tau)\op D(\Lambda)\\
&=\op D(f+\op X(\Lambda)).
\end{eqnarray*}
The conservation of the first integral is straightforwardly verified.\qed
\end{proof}

As an interesting corollary of the proposition, one sees that if the function $\Lambda$ satisfies the `gauge condition'
\begin{equation}
f+\op X(\Lambda)=0\label{gauge}
\end{equation}
then $\op X$ is a strict Noether symmetry of the new Lagrangian $\widehat L$.

\subsection{Beyond the general statement}

The set of transformations~\eref{generateur} carries a natural structure of Lie algebra with respect to the bracket $[\op X,\op Y]=\op X\op Y-\op Y\op X$ between vector fields. It can be proved~\cite{Sarlet_Cantrijn} that the subset of Noether's symmetries of $L$ forms a subalgebra. In particular, it has a vector space structure: if $\op X$ and $\op X'$ are two Noether symmetries, and if $\lambda$ is a scalar, then $\op X+\op X'$ and $\lambda\,\op X$ are also Noether symmetries, as it is clear from~\eref{FINoether}. With obvious notations, the generated first integrals are respectively $I+I'$ and $\lambda\, I$. For the remainder of this article, transformations differing by a nonzero multiplicative constant $\lambda$ will not be distinguished since first integrals $I$ and $\lambda I$ are essentially the same.

Let us now give an overview of the most significant facts regarding the interplay between first integrals and Noether symmetries. By saying that $I$ is a first integral of the problem we mean that it is a function of $(q,\dot q,t)$ which is a constant of the motion, i.e.
\begin{equation}
\op D(I)=0\quad{\rm when}\quad\{\op E_i(L)=0\}.\label{FI}
\end{equation}
Introducing the vector field associated to the dynamics,
\begin{equation*}
\op\Gamma=\frac{\partial}{\partial t}+\dot q^i\,\frac{\partial}{\partial q^i}+\Omega^i\,\frac{\partial}{\partial\dot q^i}\,,
\end{equation*}
the condition~\eref{FI} may be brought to the more compact form
\begin{equation*}
\op\Gamma(I)=0.
\end{equation*}
The total time-derivative of $I$ is thus
\begin{equation*}
\op D(I)=\frac{\partial I}{\partial t}+\dot q^i\,\frac{\partial I}{\partial q^i}+\ddot q^{\,i}\,\frac{\partial I}{\partial\dot q^i}=\frac{\partial I}{\partial\dot q^i}(\ddot q^{\,i}-\Omega^i)=-\frac{\partial I}{\partial\dot q^i}\,g^{ij}\,\op E_j(L),
\end{equation*}
where $(g^{ij})$ is the inverse of $(\partial^2_{\dot q^{i}\dot q^j}L)$,  the Hessian matrix of $L$ with respect to the velocities. Hence, it is clear that a function $I(q,\dot q,t)$ is a first integral if and only if there exist $n$ other functions $\mu^i=\mu^i(q,\dot q,t)$ such that~\cite{Anco_Bluman}
\begin{equation}
\op D(I)=\mu^i\,\op E_i(L).\label{integratingfactors}
\end{equation}
They are the \emph{integrating factors} of the Euler-Lagrange equations going hand in hand with $I$. Now, assuming $I$ generated by a Noether symmetry~\eref{generateur}, one has by using~\eref{RT2} and~\eref{FINoether}
\begin{equation}
\op D(I)=(\xi^i-\dot q^i\tau)\op E_i(L).\label{integratingfactorNS}
\end{equation}
Alternatively stated, the integrating factors are precisely the quantities $\xi^i-\dot q^i\tau$ named \emph{characteristics} of the transformation. Conversely, let $I$ be a first integral verifying~\eref{integratingfactors}. Then, thanks to~\eref{RT2}, for any transformation $\op X$ such that $\xi^i-\dot q^i\tau=\mu^i$, one has
\begin{equation*}
\op X^{[1]}(L)+\dot\tau L=\op D\left[I+L\tau+\frac{\partial L}{\partial\dot q^i}(\xi^i-\dot q^i\tau)\right].
\end{equation*}
Noether's theorem admits thereby a converse which follows readily.

\begin{theorem}[Converse of Noether's theorem]
Let $I$ be a first integral and $\mu^i$ its integrating factors. Any transformation~\eref{generateur} having the $\mu^i$ as characteristics is a Noether symmetry of $L$, with divergence term
\begin{equation*}
f=I+L\tau+\frac{\partial L}{\partial\dot q^i}(\xi^i-\dot q^i\tau).
\end{equation*}
These (infinitely many) symmetries generate $I$. Furthermore, one has
\begin{equation}
\xi^i-\dot q^i\tau=\mu^i=-g^{ij}\frac{\partial I}{\partial\dot q^j}\,.\label{Hij}
\end{equation}
\end{theorem}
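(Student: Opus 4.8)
The plan is to read everything off the Rund-Trautman identity in its form~\eref{RT2}, which the preceding discussion has established as a \emph{strong} (off-shell) identity in the variables $t$, $q^i$, $\dot q^i$, $\ddot q^{\,i}$, valid for an arbitrary divergence term. The crucial structural observation is that the coefficient multiplying $\op E_i(L)$ in~\eref{RT2} is exactly the characteristic $\xi^i-\dot q^i\tau$. I would therefore fix the first integral $I$ together with its integrating factors $\mu^i$, so that~\eref{integratingfactors} reads $\op D(I)=\mu^i\op E_i(L)$, and then select \emph{any} generator~\eref{generateur} whose characteristics match these integrating factors, that is $\xi^i-\dot q^i\tau=\mu^i$. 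Substituting this choice into~\eref{RT2} and replacing $\mu^i\op E_i(L)$ by $\op D(I)$ collapses the left-hand side into a single total derivative,
\begin{equation*}
\op X^{[1]}(L)+\op D(\tau)\,L=\op D\!\left[I+L\tau+\frac{\partial L}{\partial\dot q^i}(\xi^i-\dot q^i\tau)\right],
\end{equation*}
which is precisely the Rund-Trautman identity~\eref{RT1} with divergence term $f=I+L\tau+\frac{\partial L}{\partial\dot q^i}(\xi^i-\dot q^i\tau)$. This already proves that $\op X$ is a genuine Noether symmetry of $L$ carrying that $f$.

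Next I would verify that $I$ is really the generated first integral by feeding this same $f$ back into Noether's formula~\eref{FINoether}: the two copies of $L\tau+\frac{\partial L}{\partial\dot q^i}(\xi^i-\dot q^i\tau)$ cancel, leaving $\widehat I=I$. The assertion that there are infinitely many such symmetries then follows from a simple count: the $n$ relations $\xi^i=\mu^i+\dot q^i\tau$ leave the single function $\tau(q,\dot q,t)$ completely free, and each admissible $\tau$ yields a distinct generator sharing the same characteristics, hence generating the same $I$.

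For the explicit expression~\eref{Hij} I would return to the off-shell computation of $\op D(I)$ performed just above~\eref{integratingfactors}, where isolating the accelerations gives $\op D(I)=-\frac{\partial I}{\partial\dot q^i}\,g^{ij}\,\op E_j(L)$, with $(g^{ij})$ the inverse of the velocity Hessian of $L$. Comparing this with $\op D(I)=\mu^i\op E_i(L)$ and using the symmetry of $g^{ij}$ identifies the integrating factors as $\mu^i=-g^{ij}\partial_{\dot q^j}I$. The one genuinely delicate point — and the step I would dwell on — is the uniqueness underlying this comparison: one must argue that, because the $\op E_i(L)$ are affine in the $\ddot q^{\,i}$ with the \emph{nonsingular} Hessian as coefficient matrix, matching the coefficients of the accelerations in $\op D(I)=\mu^i\op E_i(L)$ determines the $\mu^i$ with no residual freedom. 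Here the regularity of $L$ is essential, and it is what pins down the common value of the characteristics $\xi^i-\dot q^i\tau$ even though the individual components $\tau$ and $\xi^i$ of the admissible symmetries are not.
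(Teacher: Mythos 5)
Your proposal is correct and follows essentially the same route as the paper, whose proof is the short discussion immediately preceding the theorem: substitute $\xi^i-\dot q^i\tau=\mu^i$ into~\eref{RT2}, replace $\mu^i\op E_i(L)$ by $\op D(I)$ to read off the divergence term, and obtain~\eref{Hij} from the off-shell identity $\op D(I)=-\partial_{\dot q^i}I\,g^{ij}\op E_j(L)$. Your extra remark that the regularity of $L$ pins down the $\mu^i$ uniquely by matching the $\ddot q$-coefficients is a welcome explicit justification of a step the paper leaves implicit.
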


By virtue of the last theorem, a Noether symmetry is more properly an equivalence class, the underlying relation being the equality between characteristics. Actually, choosing a representative amounts to fixing a function $\tau$, the peculiar choice $\tau=0$ providing the so-called \emph{evolutionary} representative which has straightforward prolongations. Moreover, there is a one-to-one correspondence between the Noether symmetry classes and the set of first integrals~\cite{Sarlet_Cantrijn}. 

Finally, let us mention a notable property which enforces the relationship between a Noether symmetry $\op X$ and its associated first integral $I$: it can be proved that $I$ is itself a (first order) invariant of $\op X$, i.e.
\begin{equation*}
\op X^{[1]}(I)=0.
\end{equation*} 
We point out the fact that the `strong' nature of solutions of the Rund-Trautman identity is a key assumption of this deep property. As far as we know, it was first stated by Sarlet and Cantrijn in all its generality~\cite{Sarlet_Cantrijn}. 

\section{Noether Point Symmetries of the BCK Lagrangian}\label{sec:NPS}

\subsection{The unidimensional case}

Let us seek the NPS of the Lagrangian $L_{\rm BCK}$, in the case of a single coordinate $q$ and a time-independent potential $V(q)$ acting on a particle of unit mass ($m:=1$). Inserting $L_{\rm BCK}$ in~\eref{Killing2} provides an identity which is a cubic polynomial in $\dot q$ since $L_{\rm BCK}$ is itself quadratic with respect to that variable. Each of its four coefficients must vanish, yielding the determining system
\numparts
\begin{eqnarray}
\frac{\partial\tau}{\partial q}=0\;;\label{sys1}\\
-\frac12\frac{\partial\tau}{\partial t}+\gamma\tau+\frac{\partial\xi}{\partial q}=0\;;\label{sys2}\\
V\frac{\partial\tau}{\partial q}-\frac{\partial\xi}{\partial t}+\rme^{-2\gamma t}\frac{\partial f}{\partial q}=0\;;\label{sys3}\\
V\frac{\partial\tau}{\partial t}+\xi V'+2\gamma V\tau+\rme^{-2\gamma t}\frac{\partial f}{\partial t}=0,\label{sys4}
\end{eqnarray}
\endnumparts
where $V'$ stands for $\partial_qV$. Clearly, \eref{sys1}, \eref{sys2}, and \eref{sys3}, taken in this order, impose to $\tau$, $\xi$, and $f$ the following forms, regardless of the actual potential:
\begin{eqnarray}
\tau=\tau(t)\;;\label{tau}\\
\xi=\frac12(\dot\tau-2\gamma\tau) q+\psi(t)\;;\label{xi}\\
f=\left(\frac14(\ddot\tau-2\gamma\dot\tau)q^2+\dot\psi q\right)\rme^{2\gamma t}+\chi(t),\label{f}
\end{eqnarray}
where $\psi$ and $\chi$ are so far arbitrary functions of time, as well as $\tau$. These three functions characterize entirely the sought NPS including its divergence term. Inserting their above expressions in~\eref{sys4} provides a compatibility equation between them and the potential:
\begin{equation}
\fl\left(\frac{\dot\tau}{2}-\gamma\tau\right)qV'+\psi V'+(\dot\tau+2\gamma\tau)V+\left(\frac{\dddot\tau}{4}-\gamma^2\dot\tau\right)q^2+(\ddot\psi+2\gamma\dot\psi)q+\dot\chi \rme^{-2\gamma t}=0.\label{EdC}
\end{equation}
If $\gamma$ were zero, the triple $(\tau,\psi,\chi)=(1,0,0)$ would be an evident solution of~\eref{EdC} for any $V$, corresponding to the expected strict symmetry $\op X=\partial_t$. In the presence of dissipation, however, there is not such an universal solution. Indeed, suppose the family $\mathcal F:=\{qV',V',V,q^2,q,1\}$ to be linearly independent in the space of functions of the $q$ variable. In this scenario, the only way to satisfy~\eref{EdC} is to make each of their coefficients equal to zero, but it leads to $\tau=\psi=0$ i.e. to $\op X=\op 0$. Hence, the linear dependence of the family $\mathcal F$ is a necessary condition for the existence of a NPS.

\subsubsection{The polynomial potentials of degree less than or equal to 2.}

\begin{table}\footnotesize
\centering
\renewcommand{\arraystretch}{1.5}
\begin{tabular}{lll}
\br
Noether point symmetry & First integral & Divergence term\\
\mr
$\op X_1=\partial_q$ & $I_1=\frac{1}{2\gamma}\big(2\gamma\dot q-F\big)\rme^{2\gamma t}$ & $f_1=\frac{F}{2\gamma}\,\rme^{2\gamma t}$\\
$\op X_2=\rme^{-2\gamma t}\partial_q$ & $I_2=Ft-2\gamma q-\dot q$ & $f_2=Ft-2\gamma q$\\
$\op X_3=\rme^{2\gamma t}\big(2\gamma\partial_t+F\partial_q\big)$ & $I_3=\gamma I_1^2$ & $f_3=\frac{F}{4\gamma}\big(8\gamma^2q+F\big)\rme^{4\gamma t}$\\
$\op X_4=\rme^{-2\gamma t}\big(\partial_t+(Ft-2\gamma q)\partial_q\big)$ & $I_4=\frac12\,I_2^2$ & $f_4=2\gamma^2q^2+Fq(1-2\gamma t)+\frac12 F^2t^2$\\
$\op X_5=2\partial_t+(Ft-2\gamma q)\partial_q$ & $I_5=I_1\big(I_2-\frac{F}{2\gamma}\big)$ & $f_5=\frac{F}{4\gamma^2}\big(F(2\gamma t-1)+4\gamma^2q\big)\rme^{2\gamma t}$\\
\br
\end{tabular}
\caption{The five independent Noether point symmetries for the linear potential $V(q)=-Fq$.}\label{tableNPSlineaire} 
\end{table}

They constitute a highly specific subclass because they allow to reduce the family to only $\{q^2,q,1\}$. Translating the zero of potential\footnote{According to proposition~\ref{lemme}, a translation $V(q)\to V(q)+V_0$ in $L_{\rm BCK}$ amounts to a gauge transformation induced by $\Lambda(t)=\frac{1}{2\gamma}\,V_0\,\rme^{2\gamma t}$. Thus, as expected from obvious physical considerations, Noether symmetries are preserved under this unimportant operation.} or the origin of $q$ if necessary, they can be put under one of the two generic forms $V=Aq^2$ ($A$ arbitrary) or $V=-Fq$ ($F\ne0$). The former includes the problems of the harmonic oscillator $(A>0)$ as well as the `free' particle ($A=0$), while the latter refers to the problem of the particle submitted to a constant force $F$. 

Inserting the linear potential $-Fq$ in~\eref{EdC} provides the identity
\begin{equation*}
\left(\frac{\dddot\tau}{4}-\gamma^2\dot\tau\right)q^2+\left(\ddot\psi+2\gamma\dot\psi-\frac32\,\dot\tau F-\gamma\tau F\right)q+\dot\chi\,\rme^{-2\gamma t}=0.
\end{equation*}  
One can set $\chi=0$ and the general solution in terms of $\tau$ and $\psi$ is easily found. It is a linear combination of the five independent solutions summarized in table~\ref{tableNPSlineaire}, spanning thereby a five-dimensional NPS algebra. Only two of the resulting first integrals are functionally independent. We notice that the symmetry $\op X_5-\op X_4$ tends to $\partial_t$ as $\gamma$ goes to zero and generates a first integral $I=I_5-I_4+(F/2\gamma)^2$ reducing to the energy $E$ in the same limit. We call such a transformation an `energy-like' Noether symmetry. 

Applying the same procedure to $V=Aq^2$, one has
\begin{equation*}
\left(\frac{\dddot\tau}{4}+(A-\gamma^2)\dot\tau\right)q^2+\left(\ddot\psi+2\gamma\dot\psi+2A\psi\right)q+\dot\chi\,\rme^{-2\gamma t}=0.
\end{equation*} 
Here, $\psi$ and $\tau$ are uncoupled and the general solution is a linear combination of three solutions with $\psi=0$ and two others with $\tau=0$. Their form depends on the signum of  $\zeta:=A-\gamma^2$; that is, if $A>0$, on the oscillator's behaviour: underdamped ($\zeta>0$), critically damped ($\zeta=0$) or overdamped ($\zeta<0$). They can be found in references~\cite{Cervero} and~\cite{Choudhuri}.

\subsubsection{The other potentials.} The linear dependence of the family $\mathcal F$ imposes a linear combination
\begin{equation*}
\lambda_5qV'+\lambda_4V'+\lambda_3V+\lambda_2q^2+\lambda_1q+\lambda_0=0,
\end{equation*}
where the $\lambda_i$ are scalars, $\lambda_3$, $\lambda_4$, and $\lambda_5$ being non-simultaneously zero. It is actually an equation in $V$ whose general solution has a form depending on whether some of these three scalars are zero or not. It suffices to proceed exhaustively, excluding the option $\lambda_4=\lambda_5=0$ since it brings us back to the preceding case. Assuming a dimensionless coordinate $q$ if necessary, one obtains three possibilities:
\begin{equation*}
V(q)=A\,g(q)+Bq^2+Cq\quad{\rm with}\quad g(q)=\log(q),\,q^\alpha\;{\rm or}\;\rme^q,
\end{equation*}
where the constants $A$, $B$, $C$ are given by the scalars $\lambda_i$, and $\alpha\in\mathbb R-\{0,1,2\}$. Inserting, for example, $V(q)=q^\alpha$ in~\eref{EdC} gives a linear combination of $\{q^\alpha,q^{\alpha-1},q^2,q,1\}$ whose coefficients are functions of time. From the requirement that each of them must be zero, we deduce a unique NPS under some restrictions on $B$, $C$ and $\alpha$. We apply the same procedure to the two leftover potentials and summarize the results in table~\ref{tablepotentiels}. One can see that, in each case, the NPS is energy-like.

\begin{table}\footnotesize
\centering
\renewcommand{\arraystretch}{1.5}
\begin{tabular}{c|ccc}
\br
 & $V_1(q)=A\log(q)$ & $V_2(q)=Aq^{\alpha}+\frac{4\gamma^2\alpha}{(\alpha+2)^2}q^2$ & $V_3(q)=A\rme^q+8\gamma^2q$ \\
\mr
$\op X$ & $\rme^{-2\gamma t}\big(\partial_t-2\gamma q\partial_q\big)$ & $\rme^{2\gamma\frac{\alpha-2}{\alpha+2}t}\big(\partial_t-\frac{4\gamma }{\alpha+2}q\partial_q\big)$ & $\rme^{2\gamma t}\big(\partial_t-4\gamma\partial_q\big)$ \\
$f$ & $2\gamma(\gamma q^2+At)$ & $\frac{4\gamma^2(2-\alpha)}{(\alpha+2)^2}q^2\rme^{\frac{4\gamma\alpha}{\alpha+2}t}$ & $8\gamma^2(1-q)\rme^{4\gamma t}$ \\
$I$ & $\frac12(\dot q+2\gamma q)^2+A\log(q)+2\gamma At$ & $\big(\frac12(\dot q+\frac{4\gamma}{\alpha+2}q)^2+Aq^\alpha\big)\rme^{\frac{4\gamma\alpha}{\alpha+2}t}$ & $\big(\frac12(\dot q+4\gamma)^2+A\rme^q\big)\rme^{4\gamma t}$ \\ 
\br
\end{tabular}
\caption{The three additional potentials admitting one (and only one) NPS. In $V_1$ and $V_3$ the coordinate $q$ is assumed dimensionless. In $V_2$ the coefficient $\alpha$ belongs to $\mathbb R-\{-2,0,1,2\}$.}\label{tablepotentiels} 
\end{table}

To the best of our knowledge, $V_1$ and $V_3$ had not yet been identified. As for $V_2$, it was first revealed by Djukic and Vujanovic~\cite{Djukic_Vujanovic} in their work about extension of Noether's theorem to holonomic non-conservative dynamical systems. We emphasize that $V_2$ and $V_3$ depend on the dissipation rate $\gamma$. In some sense it means that the conservative force acting upon the particle is influenced by the resistive medium. At first glance it seems physically surprising but one should conceive an interaction with the environment having both a conservative part, deriving from $V$, and a non-conservative one. In fact, one has to keep in mind that, strictly speaking, we were not motivated by `pure' physical considerations, but rather by our desire to obtain Noether point symmetries.  

\subsection{The three-dimensional central case}

A direct extension of the preceding study is the problem of the damped motion in a central potential. The Lagrangian is now
\begin{equation*}
L_{\rm BCK}=\left(\frac12\,\boldsymbol{\dot r}^2-V(r)\right)\rme^{2\gamma t},
\end{equation*}
where $(x,y,z)$ are the cartesian coordinates of the particle and $r=\sqrt{x^2+y^2+z^2}$ its distance to the center of force (the origin). Using the vector notation, transformations read
\begin{equation*}
\op X=\tau\,\frac{\partial}{\partial t}+\boldsymbol\xi\cdot\frac{\partial}{\partial\bi r}\;\;,\qquad\boldsymbol\xi:=\xi_x\,\boldsymbol{\hat x}+\xi_y\,\boldsymbol{\hat y}+\xi_z\,\boldsymbol{\hat z}.
\end{equation*}
The Killing-type equation~\eref{Killing2} thereby obtained is yet again a cubic polynomial in velocities. The cubic, quadratic and linear monomials, taken one by one, provide the following forms for $\tau$, $\boldsymbol\xi$ and $f$:
\begin{eqnarray*}
\tau&=\tau(t)\;;\\
\boldsymbol\xi&=\frac12(\dot\tau-2\gamma\tau)\,\bi r+\boldsymbol\alpha\times\bi r+\boldsymbol\psi(t)\;;\\
f&=\left(\frac14(\ddot\tau-2\gamma\dot\tau)\,r^2+\boldsymbol{\dot\psi}\cdot\bi r\right)\rme^{2\gamma t}+\chi(t),
\end{eqnarray*}
where $\boldsymbol\alpha$ is a constant vector and $\boldsymbol\psi$ a time-dependent one. Then, the monomial of degree zero imposes the compatibility condition
\begin{equation}
\fl\left(\frac{\dot\tau}{2}-\gamma\tau\right)rV'+(\dot\tau+2\gamma\tau)V+\left(\frac{\dddot\tau}{4}-\gamma^2\dot\tau\right)r^2+\dot\chi\,\rme^{-2\gamma t}=-\left(\boldsymbol{\ddot\psi}+2\gamma\,\boldsymbol{\dot\psi}+\frac{V'}{r}\,\boldsymbol\psi\right)\cdot\bi r.\label{EdCcentral}
\end{equation}
Vector $\boldsymbol\alpha$ does not appear in~\eref{EdCcentral}. Consequently, whatever $\boldsymbol\alpha$ be, the transformation
\begin{equation}
\op X=(\boldsymbol\alpha\times\bi r)\cdot\frac{\partial}{\partial\bi r}\label{rotation}
\end{equation}
is a strict NPS of the Lagrangian, irrespective of the central potential $V$. It is the generator of the rotation of magnitude $\|\boldsymbol\alpha\|$ around the direction-vector $\boldsymbol{\hat \alpha}$. In other words, $L_{\rm BCK}$ is rotationally invariant, as expected by the isotropic nature of both the conservative force and the dissipation. The transformation~\eref{rotation} generates the first integral
\begin{equation*}
I(\boldsymbol\alpha)=(\boldsymbol\alpha\times\bi r)\cdot\frac{\partial L_{\rm BCK}}{\partial\boldsymbol{\dot r}}=\boldsymbol\alpha\cdot(\bi r\times\boldsymbol{\dot r})\,\rme^{2\gamma t}.
\end{equation*}
Since it is true for any $\boldsymbol\alpha$, one obtains the conservation of
\begin{equation*}
\boldsymbol\ell_0=(\bi r\times\boldsymbol{\dot r})\,\rme^{2\gamma t}=\boldsymbol\ell\,\rme^{2\gamma t},
\end{equation*}
where $\boldsymbol\ell$ is the angular momentum about the origin, equals to $\boldsymbol\ell_0$ at $t=0$. As a corollary, the motion takes place in the plane (through the origin) orthogonal to the constant vector $\boldsymbol\ell_0$, with an areal velocity decreasing exponentially with time.

Let us seek eventual supplementary symmetries. Passing to the spherical coordinates $(r,\vartheta,\varphi)$, such that
\begin{equation*}
x=r\sin\vartheta\cos\varphi\quad,\quad y=r\sin\vartheta\sin\varphi\quad,\quad z=r\cos\vartheta,
\end{equation*}
the left-hand side of~\eref{EdCcentral} has a spatial dependence contained only in the variable $r$, while the other side depends also on the two angles. Differentiating that equation two times with respect to $\vartheta$, the LHS disappears whereas the RHS changes sign. Hence, the latter must be zero as well as the former and~\eref{EdCcentral} splits into two:
\begin{eqnarray}
\left(\frac{\dot\tau}{2}-\gamma\tau\right)rV'+(\dot\tau+2\gamma\tau)V+\left(\frac{\dddot\tau}{4}-\gamma^2\dot\tau\right)r^2+\dot\chi\,\rme^{-2\gamma t}=0\;;\label{EdC1}\\
\boldsymbol{\ddot\psi}+2\gamma\,\boldsymbol{\dot\psi}+\frac{V'}{r}\,\boldsymbol\psi=\boldsymbol 0.\label{EdC2}
\end{eqnarray}
Equation~\eref{EdC2} can only be fulfilled by quadratic (or zero) potentials $V=A\,r^2$. Actually, these potentials are very special since they make $L_{\rm BCK}$ separable into three replications of the same one-dimensional Lagrangian, each one in terms of a single cartesian coordinate. Hence, these peculiar cases bring us back to the previous problem and contain $3\times5=15$ NPS in all. 

The non-quadratic potentials, however, must verify equation~\eref{EdC1} which is quite similar to~\eref{EdC} but differs by the absence of the function $\psi$. Consequently, the solutions of~\eref{EdCcentral} are exactly the ones of~\eref{EdC} having a zero $\psi$ (after replacing $q$ by $r$). A quick inspection of tables~\ref{tableNPSlineaire} and~\ref{tablepotentiels} shows that $V_1$ and $V_2$ have an additional symmetry, in contrary to $V_3$ and the linear potential.

\section{From Noether point symmetries to equivalent autonomous problems}\label{sec:autonomous}

As a rule in physics, a symmetry provides suitable coordinates for the description of a problem in a simpler way. Having this idea in mind, we will show how such coordinates arise from a NPS, and secondly how they can be used to map a general Lagrangian problem into an autonomous one, whose Hamiltonian is precisely the first integral $I$ generated by the symmetry. Then, we will apply the procedure to $L_{\rm{BCK}}$. 

Let $L$ be any Lagrangian in terms of $n$ coordinates $q^i$, and suppose that it possesses the NPS
\begin{equation*}
\op X=\tau(q,t)\frac{\partial}{\partial t}+\xi^i(q,t)\frac{\partial}{\partial q^i}\,,
\end{equation*}
with divergence term $f$. Since we will perform changes of variables, it seems preferable to use more intrinsic formulations. First, noticing that $\op X(t)=\tau$ and $\op X(q^i)=\xi^i$, the NPS may be written
\begin{equation*}
\op X=\op X(t)\frac{\partial}{\partial t}+\op X(q^i)\frac{\partial}{\partial q^i}\,.
\end{equation*}
Then, using the total time-derivative operator~\eref{timederivative}, the Rund-Trautman identity~\eref{RT1} verified by $\op X$ becomes
\begin{equation}
\op X^{[1]}(L)+\op D(\op X(t))L=\op D(f).\label{intrinsicRT}
\end{equation}
Now, let us consider an invertible change of variables $(q,t)\to(Q,T)$, where $Q$ is seen as the new set of coordinates and $T$ the new time. It provides an alternative expression to the vector field $\op X$:
\begin{equation*}
\op X=\op X(T)\frac{\partial}{\partial T}+\op X(Q^i)\frac{\partial}{\partial Q^i}\,.
\end{equation*}
To avoid any confusion, we denote the total derivative with respect to $T$ in this new representation by the `prime' symbol, the corresponding operator being
\begin{equation*}
\opt D=\partial_T+{Q^i}'\,\frac{\partial}{\partial Q^i}+\dots=\opt D(t)\,\op D.
\end{equation*} 
The dynamics in the $(Q,T)$ variables may now be described by the transformed Lagrangian
\begin{equation*}
\widetilde L(Q,Q',T)=L\,\opt D(t).
\end{equation*}
Here, the identity~\eref{commutation} is equivalent to 
\begin{equation}
\opt D(\op X(G))-\op X^{[1]}(\opt D(G))=\opt D(\op X(T))\opt D(G),\label{commutation2}
\end{equation}
for any function $G$ defined over the extended configuration space. Using~\eref{commutation2} with $G\equiv t$, one readily derives from~\eref{intrinsicRT}
\begin{equation*}
\op X^{[1]}(\widetilde L)+\opt D(\op X(T))\widetilde L=\opt D(t)\Big[\op X^{[1]}(L)+\op D(\op X(t))L\Big]=\opt D(t)\op D(f)=\opt D(f).
\end{equation*}
In words, $\op X$ remains a NPS of $\widetilde L$, with the same divergence term. The corresponding first integral, as logic suggests, is still $I$ (see the appendix):
\begin{equation}
\widetilde I=f-\widetilde L\,\op X(T)-\frac{\partial\widetilde L}{\partial {Q^i}'}\Big(\op X(Q^i)-{Q^i}'\,\op X(T)\Big)=I.\label{appendice}
\end{equation}
The variables $(Q,T)$ are so far arbitrary but, in order to take into account the symmetry in the simplest possible way, one has to choose proper ones. They are the \emph{canonical variables} of $\op X$, such that $\op X(Q^i)=0$ and $\op X(T)=1$. They reduce $\op X$ to the translation symmetry $\partial _T$. Performing a gauge change if necessary, thanks to proposition~\ref{lemme} and formula~\eref{gauge}, one has thereby obtained a Lagrangian which is $T$-independent. All this can be summarized in the following theorem.

\begin{theorem}\label{thm:transformation}
Let $(Q,T)$ be canonical variables of $\op X$, such that $\op X(Q^i)=0$ and $\op X(T)=1$, and let $\Lambda$ be a function satisfying the gauge condition $f+\op X(\Lambda)=0$. Then, in the $(Q,T)$ variables, the new Lagrangian $\widetilde L=L\,\opt D(t)+\opt D(\Lambda)$ is explicitly $T$-independent: $\widetilde L=\widetilde L(Q,Q')$. Furthermore, the induced Hamiltonian is precisely the first integral $I$:
\begin{equation*}
\widetilde H=P_i{Q^i}'-\widetilde L=I\quad{\it with}\quad P_i=\frac{\partial\widetilde L}{\partial{Q^i}'}\,.
\end{equation*}
\end{theorem}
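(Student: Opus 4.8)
The plan is to establish the two assertions separately while exploiting the fact that, by construction of the canonical variables, the generator reduces to a pure time translation in the new representation: since $\op X(T)=1$ and $\op X(Q^i)=0$, one has $\op X=\partial_T$. The whole argument then rests on showing that $\op X$ is not merely a Noether symmetry of the new Lagrangian $\widetilde L$ but a \emph{strict} one (zero divergence term), after which both conclusions drop out of the general machinery of \sref{sec:Noetherthm}.

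First I would track the divergence term through the two operations that build $\widetilde L$ from $L$. The passage $L\mapsto L\,\opt D(t)$ is the change of variables already analysed via \eref{commutation2}: combined with the intrinsic Rund--Trautman identity \eref{intrinsicRT}, it shows that $\op X$ is a Noether symmetry of $L\,\opt D(t)$ with the \emph{same} divergence term $f$. The subsequent gauge shift $L\,\opt D(t)\mapsto\widetilde L=L\,\opt D(t)+\opt D(\Lambda)$ is covered by proposition~\ref{lemme}, which changes the divergence term into $f+\op X(\Lambda)$. By the gauge condition \eref{gauge} this is exactly zero, so $\op X$ is a strict Noether symmetry of $\widetilde L$.

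Next I would read off $T$-independence from the strict Rund--Trautman identity. In the $(Q,T)$ picture the prolongation correction $\big(\opt D(\op X(Q^i))-{Q^i}'\,\opt D(\op X(T))\big)\partial_{{Q^i}'}$ vanishes identically, because $\op X(Q^i)=0$ and $\op X(T)=1$ are constants; hence $\op X^{[1]}=\op X=\partial_T$ and likewise $\opt D(\op X(T))=0$. The identity \eref{intrinsicRT} written for $\widetilde L$ with null divergence term therefore collapses to $\partial_T\widetilde L=0$, which is precisely $T$-independence. Finally, to identify the Hamiltonian I would substitute $\tau=\op X(T)=1$, $\xi^i=\op X(Q^i)=0$ and $f=0$ into the Noether first-integral formula \eref{FINoether}; its right-hand side becomes $\frac{\partial\widetilde L}{\partial{Q^i}'}{Q^i}'-\widetilde L=P_i{Q^i}'-\widetilde L=\widetilde H$. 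That this equals the original $I$ follows from the invariance of the generated first integral under the change of variables (\eref{appendice}) and under the gauge transformation (proposition~\ref{lemme}).

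I expect the only delicate point to be the careful bookkeeping of the divergence term across the two successive modifications of the Lagrangian, together with the justification that the prolongation of $\op X$ is genuinely $\partial_T$ --- this requires that $\op X(Q^i)=0$ and $\op X(T)=1$ hold as identities on the extended configuration space, not merely pointwise, so that both the prolongation term and $\opt D(\op X(T))$ vanish identically. Everything else is routine substitution into formulas already proved.
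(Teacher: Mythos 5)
Your proposal is correct and follows essentially the same route as the paper: the change of variables preserves the Noether symmetry with the same divergence term and the same first integral (via the commutation identity \eref{commutation2} and formula \eref{appendice}), the gauge shift by $\opt D(\Lambda)$ makes the symmetry strict by proposition~\ref{lemme} and the condition \eref{gauge}, and in canonical variables the strict Rund--Trautman identity for $\partial_T$ collapses to $\partial_T\widetilde L=0$ while the Noether formula \eref{FINoether} with $\tau=1$, $\xi^i=0$, $f=0$ yields $\widetilde H=P_i{Q^i}'-\widetilde L=I$. Your explicit remark that the first prolongation of $\op X$ reduces to $\partial_T$ because $\op X(Q^i)$ and $\op X(T)$ are constants is a point the paper leaves implicit, but it is the same argument.
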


Let us apply the theorem to $L_{\rm BCK}$ in one dimension for which we assume a NPS with a nonzero $\tau$ function. Taking advantage of the $q$-independence of $\tau$, a simple time rescaling $t\to T$ is possible, through
\begin{equation*}
T=\int\frac{\rmd t}{\tau}\,.
\end{equation*}
To obtain the canonical coordinate $Q$, we seek an invariant of the differential equation
\begin{equation*}
\frac{\rmd t}{\tau}=\frac{\rmd q}{\xi}\,,
\end{equation*}
with $\xi$ given by~\eref{xi}. One easily finds
\begin{equation*}
Q=\frac{\rme^{\gamma t}}{\sqrt\tau}\,q-\int\frac{\rme^{\gamma t}\psi}{\tau^{3/2}}\,\rmd t.
\end{equation*}
The new velocity is thus given by
\begin{equation}
Q'=\frac{\op D(Q)}{\op D(T)}=\tau\left(\frac{\partial Q}{\partial t}+\dot q\,\frac{\partial Q}{\partial q}\right)=\rme^{\gamma t}\sqrt\tau\left(\dot q-\frac{\xi}{\tau}\right).\label{Q'}
\end{equation}
Hereafter $Q$ and $T$ are the independent variables. Since $f$ is quadratic in $q$ which is itself linear in $Q$, the function $f$ has the form
\begin{equation*}
f=f_2(T)Q^2+f_1(T)Q+f_0(T).
\end{equation*}
Expanding~\eref{f} gives the precise expressions of $f_0$, $f_1$, and $f_2$. In particular, the last two ones are
\begin{eqnarray*}
f_1&=\frac{1}{2}(\ddot\tau-2\gamma\dot\tau)\,\tau\int\frac{\rme^{\gamma t}\psi}{\tau^{3/2}}\,\rmd t+\dot\psi\,\rme^{\gamma t}\sqrt\tau\;;\\
f_2&=\frac{1}{4}(\ddot\tau-2\gamma\dot\tau)\,\tau.
\end{eqnarray*}
The gauge condition $f+\op X(\Lambda)=f+\partial_T\Lambda=0$ is automatically fulfilled if one sets
\begin{equation*}
\Lambda=-Q^2\int f_2(T)\,\rmd T-Q\int f_1(T)\,\rmd T-\int f_0(T)\,\rmd T.
\end{equation*}
However, the construction of the explicitly $T$-independent Lagrangian needs only the knowledge of $\opt D(\Lambda)$, that is, the partial derivatives of $\Lambda$. The first one, $\partial_T\Lambda=-f$, is already known and the integration of $f_1$ and $f_2$ yields
\begin{equation*}
\frac{\partial\Lambda}{\partial Q}=-2\,Q\int f_2(T)\,\rmd T-\int f_1(T)\,\rmd T=-\frac{\xi}{\sqrt\tau}\,\rme^{\gamma t}.
\end{equation*}
Then, using~\eref{Q'}, the autonomous Lagrangian is
\begin{equation}
\widetilde L(Q,Q')=L\,\opt D(t)+\opt D(\Lambda)=\frac{1}{2}\, Q'^2-\widetilde V(Q),\label{AutonomousLagrangian}
\end{equation}
where
\begin{equation*}
\widetilde V=\left(V\,\tau-\frac{\xi^2}{2\tau}\right)\rme^{2\gamma t}+f
\end{equation*}
appears as the new potential. For consistency, one can check the $T$-independence of $\widetilde V$:
\begin{equation*}
\frac{\partial\widetilde V}{\partial T}=\op X(\widetilde V)=\tau\frac{\partial\widetilde V}{\partial t}+\xi\frac{\partial\widetilde V}{\partial q}=0,
\end{equation*}
by virtue of equations~\eref{sys1}-\eref{sys4}. As a last step, we derive from~\eref{AutonomousLagrangian} the momentum $P=Q'$ and the conserved Hamiltonian
\begin{equation*}
\widetilde H(Q,P)=\frac12 P^2+\widetilde V(Q)=I.
\end{equation*}
In this new picture, $I$ is nothing but the energy of the equivalent autonomous problem. We summarize on table~\ref{tableH} the explicit transformations together with their associated Hamiltonian for the three potentials $V_1$, $V_2$, and $V_3$ found above. We also include the case of the linear potential $-Fq$ whose three symmetries $\op X_3$, $\op X_4$, $\op X_5$ have a nonzero $\tau$ (see table~\ref{tableNPSlineaire}). Finally, applying also theorem~\ref{thm:transformation} to the NPS whose $\tau$ is zero, we end table~\ref{tableH} with the two leftover NPS of the linear potential. The same procedure may be followed in the three-dimensional central case as well.

\begin{table}\small
\centering
\renewcommand{\arraystretch}{1.5}
\begin{tabular}{llll}
\br
$V$ & $T$ & $Q$ & $\widetilde H$\\
\mr
$V_1(q)=A\log(q)$ & $\frac{1}{2\gamma}\,\rme^{2\gamma t}$ & $q\,\rme^{2\gamma t}$ & $\frac12 P^2+A\log(Q)$\\ 
$V_2(q)=Aq^\alpha+\frac{4\gamma^2\alpha}{(\alpha+2)^2}\,q^2$ & $\frac{1}{2\gamma}\frac{2+\alpha}{2-\alpha}\,\rme^{2\gamma\frac{2-\alpha}{2+\alpha}t}$ & $q\,\rme^{\frac{4\gamma t}{\alpha+2}}$ &  $\frac12 P^2+AQ^\alpha$\\
$V_3(q)=A\rme^{q}+8\gamma^2q$ & $-\frac{1}{2\gamma}\,\rme^{-2\gamma t}$ & $q+4\gamma t$ & $\frac12 P^2+A\rme^{Q}$\\
\hfill($\op X_3$) & $\rme^{-2\gamma t}$ & $2\gamma q-Ft$ & $\frac12\,P^2$\\
\hfill($\op X_4$) & $\rme^{2\gamma t}$ & $(F(1-2\gamma t)+4\gamma^2q)\,\rme^{2\gamma t}$ & $\frac12 P^2$\\
 $V(q)=-Fq$ \hfill($\op X_5$) & $\gamma t$ & $(F(1-\gamma t)+2\gamma^2q)\,\rme^{\gamma t}$ & $\frac12 P^2-\frac12 Q^2$\\
\hfill($\op X_1$) & $2\gamma^2q$ & $\rme^{2\gamma t}$  & $(2\sqrt P+F)Q$\\
\hfill($\op X_2$) & $\frac{\gamma^2}{2}\,q\rme^{2\gamma t}$ & $\gamma t$ & $2\rme^{-Q}\sqrt P+FQ$\\
\br
\end{tabular}
\caption{Autonomous one-dimensional Hamiltonians deduced from the symmetries.}\label{tableH}
\end{table}

\section{Lie point symmetries}\label{sec:Lie}

A Noether symmetry of a Lagrangian $L$ has the well-known property of permuting solutions of the Euler-Lagrange equations. In other words, it is a Lie symmetry of these equations:
\begin{equation}
\op X^{[2]}(\op E_i(L))=0\quad{\rm when}\quad\{\op E_j(L)=0\},\label{Lie}
\end{equation}
where we introduced the second prolongation of $\op X$:
\begin{equation*}
\op X^{[2]}:=\op X^{[1]}+(\ddot\xi^{\,i}-\dot q^i\ddot\tau-2\ddot q^{\,i}\dot\tau)\frac{\partial}{\partial\ddot q^{\,i}}\,.
\end{equation*}
The converse is not true, in general. For instance, $\partial_t$ is basically a Lie symmetry of $\boldsymbol{\op E}(L_{\rm BCK})=0$ whatever $V(q)$ be but it is never a Noether symmetry of $L_{\rm BCK}$. Thus, one expects more symmetries by seeking solutions of~\eref{Lie} and, by the way, more first integrals.

Yet again, let us restrict ourself to Lie point symmetries (LPS) in one dimension. Each of them `carries' naturally a first integral which may be extracted as follows. We first determine a zero order $x=x(q,t)$ and a first order $y=y(q,\dot q,t)$ invariants of $\op X^{[1]}$. Then, according to Lie's theory, the second order differential equation $\op E(L)=0$ may be reduced to a first order one, say $\Phi(x,y,y'_x)=0$, with
\begin{equation*}
y'_x:=\frac{\op D(y)}{\op D(x)}\,.
\end{equation*}
Its general solution admits an implicit form $G(x,y)=C$, where $C$ is an integration constant; whence,
\begin{equation*}
I(q,\dot q,t):=G(x(q,t),y(q,\dot q,t))=\mathrm{cst}.\quad{\rm when}\quad\op E(L)=0.
\end{equation*}
Since $G$ is a function of $\op X$'s invariants, $I$ is also an invariant of the LPS, i.e. $\op X^{[1]}(I)=0$. Applying the method to the LPS $\partial_t$ of the Euler-Lagrange equation derived from $L_{\rm BCK}$, with $x=q$ and $y=\dot q$, one obtains an Abel equation of the second kind, viz. 
\begin{equation*}
yy'_x+2\gamma y+\partial_xV(x)=0.
\end{equation*}
It must be solved to deduce the first integral, which has the great advantage of being explicitly time-independent. The other side of the coin is that it is generally a hard task which necessitates tedious algebraic and analytic manipulations to obtain cumbersome solutions in closed or parametric forms~\cite{Polyanin}. However, the quadratic~\cite{Denman1968} and linear potentials are notable exceptions (see $I_6$ in table~\ref{tableLPS}). 

Let us now move on to the search for all the LPS of $\op E(L_{\rm BCK})$. The determining equation~\eref{Lie} reads in this context
\begin{equation*}
\ddot\xi-\dot q\ddot\tau-2\ddot q\dot\tau+2\gamma(\dot\xi-\dot q\dot\tau)+\xi V''=0\quad{\rm with}\quad\ddot q=-V'-2\gamma\dot q.
\end{equation*}
Expanding the time-derivatives, one recovers a polynomial identity of the third degree. The equations relatives to $\dot q^3$ and $\dot q^2$ restrict as usually the form of $\tau$ and $\xi$:
\begin{eqnarray*}
\tau=a(t)+b(t)q\,,\\
\xi=(\dot a-2\gamma a)q^2+c(t)q+d(t)\,,
\end{eqnarray*}
whereas the two leftover equations produce compatibility equations:
\begin{eqnarray}
3(\ddot a-2\gamma\dot a)q-\ddot b+2\gamma\dot b+2\dot c+3aV'=0\;;\label{aV'}\\
((\dot a-2\gamma a)q^2+cq+d)V''+(4\gamma aq+2\dot b-c)V'\nonumber\\
\qquad\qquad\qquad\qquad\qquad\;+(\dddot a-4\gamma^2\dot a)q^2+(\ddot c+2\gamma\dot c)q+\ddot d+2\gamma\dot d=0.
\end{eqnarray}
Equation~\eref{aV'} imposes $a=0$ unless $V$ is linear or quadratic: here again, one observes the dichotomy between these peculiar potentials and all the others. 

The full analysis of Lie point symmetries in the linear and quadratic cases is straightforward and leads to eight independent LPS, including the five NPS previously identified. Apart from $\partial_t$, the new symmetries have a $q$-dependent $\tau$ function. One can find their expressions e.g. in reference~\cite{Pandey}, and the complete Lie group analysis of the quadratic case, including the associated first integrals, in the work of Cerver\'o and Villarroel~\cite{Cervero}. For the sake of completeness, we summarize the three new symmetries of the linear potential in table~\ref{tableLPS} together with their first integrals. Moreover, according to the converse of Noether's theorem and formula~\eref{Hij}, any of those three first integrals $I_k$ is generated by a class of Noether symmetries $\{Y_k\}$ whose characteristics equal
\begin{equation*}
\mu_k=\rme^{-2\gamma t}\,\frac{\partial I_k}{\partial\dot q}\,.
\end{equation*}
We gave in table~\ref{tableLPS} their evolutionary representative $\mu_k\partial_q$\footnote{We could have chosen the \textit{contact representative} which is more satisfying in both mathematical and physical senses (contact transformations play central roles in wave theory, especially at the level of Huygens' principle~\cite{IbragimovReview}, and in Hamiltonian mechanics~\cite{Whittaker}).}.

\begin{table}\small
\centering
\renewcommand{\arraystretch}{1.5}
\begin{tabular}{llll}
\br
Lie point symmetry & First integral & Noether symmetry\\
\mr
$\op X_6=\partial_t$ & $I_6=F\log|I_1|-2\gamma I_2$ & $\op Y_6=\frac{\dot q}{I_1}\,\partial_q$\\
$\op X_7=(Ft-2\gamma q)\partial_t+(Ft-2\gamma q)^2\partial_q$ & $I_7=\frac{I_1}{2\gamma I_2+F}$ & $\op Y_7=\frac{Ft-2\gamma q}{(2\gamma I_2-F)^2}\,\partial_q$\\
$\op X_8=\rme^{2\gamma t}(Ft-2\gamma q)(2\gamma\partial_t+F\partial_q)$ & $I_8=\frac{2\gamma I_2+F}{I_1}$ &  $\op Y_8=\frac{Ft-2\gamma q}{I_1^2}\,\partial_q$\\
\br
\end{tabular}
\caption{The three additional Lie point symmetries for the linear potential $V=-Fq$, their associated first integrals, and the Noether symmetry generating the same conservation law (in the evolutionary form). See table~\ref{tableNPSlineaire} for the expression of the first integrals $I_1$ and $I_2$.}\label{tableLPS}
\end{table}

Let us now consider the other potentials. The condition $a=0$ restricts once again $\tau$ as a pure function of $t$. Then, equation~\eref{aV'} gives $c$ as a function of $\tau=b$,
\begin{equation*}
c=\frac12(\dot\tau-2\gamma\tau)+K\quad(K=\rm{cst.})
\end{equation*} 
and the last compatibility equation becomes
\begin{equation*}
\fl\left[\frac12\left(\dot\tau-2\gamma\tau+2K\right)q+d\right]V''+\frac12\left(3\dot\tau+2\gamma \tau-2K\right)V'+\frac12\left(\ddot\tau-4\gamma^2\dot\tau\right)q+\ddot d+2\gamma\dot d=0.
\end{equation*}
Integrating this equation with respect to $q$, we get at the right-hand side a function of $t$, say $g(t)$. The resulting expression, in its entirety, is precisely the compatibility equation~\eref{EdC} peculiar to the NPS, where $d=\psi$, $g=-\dot\chi\rme^{2\gamma t}$ and once the replacement $\tau\to\tau-K/\gamma$ is done. Hence, apart from $\partial_t$ and the NPS already found, there is no new Lie point symmetries.

\section{Weak Noether's theorem and the action of Bateman-Caldirola-Kanai}\label{sec:action}

As pointed out in \sref{sec:Noetherthm}, the existence of integrating factors is a key-feature of first integrals. However, one could imagine some arbitrary combination $\nu_i(q,\dot q,t)\op E_i(L)$ which, by construction, is vanishing along the solutions of the Euler-Lagrange equations. Generically, the functions $\nu_i$ are not integrating factors of these equations, so that the quantity
\begin{equation*}
\mathcal I=\int\nu_i(q,\dot q,t)\op E_i(L)\,\rmd t,
\end{equation*} 
albeit constant along the actual paths, is not a first integral. Indeed, it is not expressible as a function of $(q,\dot q,t)$ since it depends on the whole history of the motion. One says that $\mathcal I$ is a \emph{non-local constant of motion}. We are in a position to state a weakened version of Noether's theorem.

\begin{theorem}[Weak Noether's theorem]
For any transformation~\eref{transformation}, the quantity
\begin{equation}
\mathcal I=\int(\xi_i-\dot q_i\tau)\op E_i(L)\,\rmd t=\int\big(\op X^{[1]}(L)+\dot\tau L\big)\,\rmd t-L\tau-\frac{\partial L}{\partial\dot q_i}(\xi_i-\dot q_i\tau)\label{Cstofmotion}
\end{equation}
is a constant of motion of the problem.
\end{theorem}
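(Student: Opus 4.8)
The plan is to observe that the equality between the two expressions for $\mathcal I$ in \eref{Cstofmotion} is simply the \emph{off-shell} content of the identity \eref{RT2}, and that constancy along the motion then follows at once because the Euler-Lagrange expressions vanish on solutions. The essential point, which I would stress from the outset, is that no symmetry or Rund-Trautman hypothesis is needed anywhere: the argument rests on a purely algebraic identity valid for arbitrary $\tau$ and $\xi^i$.

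First I would record the algebraic identity underlying \eref{RT2}, namely
\[
\op X^{[1]}(L)+\dot\tau L = (\xi^i-\dot q^i\tau)\,\op E_i(L)+\op D\!\left[L\tau+\frac{\partial L}{\partial\dot q^i}(\xi^i-\dot q^i\tau)\right],
\]
which holds identically in $(t,q,\dot q,\ddot q)$ for every choice of the functions $\tau$ and $\xi^i$. It is obtained, exactly as in the passage from \eref{RT1} to \eref{RT2}, by the Leibniz rule together with an integration by parts in the velocities; crucially, the right-hand side is never required to be a total derivative, so the derivation is wholly independent of whether $\op X$ is a Noether symmetry.

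Next I would integrate this identity in $t$ and move the total-derivative bracket to the left as a boundary term. This turns the integral of $\op X^{[1]}(L)+\dot\tau L$ into the integral of the characteristic contracted with $\op E_i(L)$, minus the local term $L\tau+\partial_{\dot q^i}L\,(\xi^i-\dot q^i\tau)$, which is precisely the equality of the two forms of $\mathcal I$ displayed in \eref{Cstofmotion}. Constancy is then read directly off the first form: its integrand $(\xi^i-\dot q^i\tau)\,\op E_i(L)$ vanishes identically on any actual path, where $\op E_i(L)=0$, so $\op D(\mathcal I)=0$ along solutions. Equivalently, differentiating the local-plus-integral form and inserting the algebraic identity gives $\op D(\mathcal I)=(\xi^i-\dot q^i\tau)\,\op E_i(L)$, again zero on-shell.

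There is no substantive computational obstacle; the only point requiring care is conceptual. One must emphasise that, because no symmetry condition is invoked, the result applies to \emph{every} transformation \eref{transformation}, and that $\mathcal I$ is in general only a \emph{non-local} constant of motion: it carries the integral term and is therefore not expressible as a function of $(q,\dot q,t)$ alone, so it need not be a genuine first integral. Accordingly the proof establishes only the weak conclusion $\op D(\mathcal I)=0$ along solutions, which is exactly what the statement asserts.
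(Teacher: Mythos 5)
Your proof is correct and follows essentially the same route the paper takes: the off-shell identity behind~\eref{RT2} gives the equality of the two expressions for $\mathcal I$ after integration, and constancy is immediate since the integrand $(\xi^i-\dot q^i\tau)\,\op E_i(L)$ vanishes on solutions. Your emphasis that no symmetry hypothesis is used and that $\mathcal I$ is in general only a non-local constant of motion matches the paper's own framing of the result.
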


Such a conservation law seems unwieldy unless we manage to express the integral in~\eref{Cstofmotion} as a function of $(q,\dot q,t)$ along the solution curves, that is, to prove the existence of a function $f(q,\dot q,t)$ such that
\begin{equation*}
\int(\op X^{[1]}(L)+\dot\tau L)\,\rmd t=f(q,\dot q,t)\quad{\rm when}\quad \{\op E_i(L)=0\}.
\end{equation*} 
Of course, this condition is fulfilled if we set
\begin{equation}
f(q,\dot q,t)=L\tau+\frac{\partial L}{\partial\dot q_i}(\xi_i-\dot q_i\tau)+{\rm cst.}\label{truism}
\end{equation}
but such a naive choice leads to a truism: scalar numbers are constants of motion. Actually, any alternative expression $f(q,\dot q,t)$ provides a true first integral (this is certainly the case if $f$ does not depend on $\dot q$). It amounts to ask for a function $f$, excluding~\eref{truism}, such that
\begin{equation*}
\op X^{[1]}(L)+\dot\tau L=\dot f\quad{\rm when}\quad\{\op E_i(L)=0\}.
\end{equation*}
Put another way, the transformation $\op X$ must be an `on-flow' solution of the Rund-Trautman identity. We mention that, recently, Gorni and Zampieri~\cite{Gorni_Zampieri,Gorni_Zampieri2} used this weakened Noether's theorem to derive alternative constructions of some known first integrals. 

The theorem gives us access to non-local extensions of the usual conservation laws. For instance, the time translation $\partial_t$ is a Noether symmetry if and only if, according to~\eref{integratingfactorNS}, $\mu^i=\dot q^i$ are integrating factors. When $L$ is time-independent, this is certainly the case and the corresponding first integral is the Hamiltonian $H=\dot q^i\partial_{\dot q^i}L-L$. Generally, it comes down to the law of conservation of mechanical energy. On the contrary, if $\partial_t$ is not a Noether symmetry one can nonetheless associate to $\partial_t$ the constant of motion
\begin{equation*}
\mathcal I=-\int\dot q_i\,\op E_i(L)\,\rmd t=H+\int\frac{\partial L}{\partial t}\,\rmd t.
\end{equation*}
It reflects basically the relation $\partial_tL=-\dot H$ along the solution curves. Returning to $L_{\rm BCK}$, this constant of motion acquires an interesting physical meaning. Indeed, noticing the singular property
\begin{equation*}
\frac{\partial L_{\rm BCK}}{\partial t}=2\gamma\,L_{\rm BCK},
\end{equation*}
one obtains the kinematic conservation of
\begin{equation}
\mathcal I=H+2\gamma A_{\rm BCK},\label{cstaction}
\end{equation}
where
\begin{equation*}
A_{\rm BCK}=\int L_{\rm BCK}\,\rmd t
\end{equation*}
is the action based on $L_{\rm BCK}$. Consequently, along the solution curves, the action is expressible as a function of $(\boldsymbol r,\boldsymbol{\dot r},t)$:
\begin{equation}
A_{\rm BCK}=\frac{\mathcal I-H}{2\gamma}\quad{\rm when}\quad\boldsymbol{\op E}(L_{\rm BCK})=\boldsymbol 0,\label{cstaction2}
\end{equation} 
where $\mathcal I$ plays merely the role of an integrating constant. Now, suppose that one manages to find an alternative expression of $A_{\rm BCK}$, i.e. a function  $G(\boldsymbol r,\boldsymbol{\dot r},t)$ such that
\begin{equation*}
A_{\rm BCK}=G(\boldsymbol r,\boldsymbol{\dot r},t)\not\equiv-\frac{H}{2\gamma}+{\rm cst.}\quad{\rm when}\quad\boldsymbol{\op E}(L_{\rm BCK})=\boldsymbol 0.
\end{equation*} 
Then, by injecting this expression in~\eref{cstaction}, $\mathcal I$ becomes a non-trivial first integral $I$. Conversely, if $\mathcal I$ is so, one extracts from~\eref{cstaction2} an expression of $A_{\rm BCK}$ along the solution curves. To sum up, an alternative local expression of the action along the actual paths amounts to a first integral, i.e. to a Noether symmetry. According to~\eref{FINoether}, this relationship is given by
\begin{equation*}
A_{\rm BCK}=\frac{1}{2\gamma}\left[f-(1+\tau)\left(\frac12\,\boldsymbol{\dot r}^2+V\right)\rme^{2\gamma t}-\boldsymbol\xi\cdot\boldsymbol{\dot r}\,\rme^{2\gamma t}\right],
\end{equation*}
where $\tau$ is chosen dimensionless.

\section{Conclusion}

In this article, we addressed the question of the conservation laws in the classical dynamics of a particle submitted to a linear dissipation, basing our study on the variational framework offered by the celebrated BCK Lagrangian. We found all the time-independent potentials $V(q)$ for which the Lagrangian admits Noether point symmetries in the unidimensional case. Then, we showed that not all of them `resist' under the direct mapping $V(q)\to V(r)$ to the central case in three dimensions. As a next stage, we established a systematic method to transform Lagrangians enjoying a variational point symmetry into autonomous others, and applied it in our context. For the sake of completeness, we also investigated the Lie point symmetries which form an overalgebra of Noether ones. Apart from the family of potentials at most quadratic, whose point symmetry algebra gains three dimensions (from 5 to 8), the others acquire only the obvious time translation symmetry. Finally, we enlightened the interconnection between symmetries and the BCK action functional, in relation with a weakened version of Noether's theorem and the idea of on-flow solutions of the Rund-Trautman identity.

\ack

We wish to thank all the colleagues of the \textit{Groupe de Physique Statistique} for many enlightening discussions. One of us (R. L.) is grateful for their kind hospitality. 

\appendix

\section*{Appendix}

\setcounter{section}{1}

We give here a simple proof of formula~\eref{appendice}. Using the hypothesis and notations of \sref{sec:autonomous}, the equality $\widetilde L=L\opt D(t)$, i.e. $L\rmd t=\widetilde L\rmd T$, is tantamount to the invariance of the Poincar\'e-Cartan form, that is
\begin{equation*}
p_i\rmd q^i-H\rmd t=P_i\rmd Q^i-\widetilde H\rmd T.
\end{equation*}
Setting $q^0=t$, $p_0=-H$, $Q^0=T$, $P_0=-\widetilde H$, this invariance reads
\begin{equation}
p_\mu\rmd q^\mu=P_\mu\rmd Q^\mu\,,\label{Cartan}
\end{equation}
where the index $\mu$ goes from 0 to $n$. Furthermore, the first integrals $I$ and $\widetilde I$ induced by the common point symmetry $\op X$ of $L$ and $\widetilde L$ become simply
\begin{equation*}
I=f-p_\mu\,\op X(q^\mu)\quad{\rm and}\quad\widetilde I=f-P_\mu\,\op X(Q^\mu).
\end{equation*}
The equality~\eref{Cartan} allows to determine the new momenta from the old ones through
\begin{equation}
p_\nu\frac{\partial q^\nu}{\partial Q^\mu}\rmd Q^\mu=P_\mu\rmd Q^\mu\quad\Longrightarrow\quad P_\mu=p_\nu\,\frac{\partial q^\nu}{\partial Q^\mu}\,.
\end{equation}
Then,
\begin{equation*}
\widetilde I=f-p_\nu\,\frac{\partial q^\nu}{\partial Q^\mu}\op X(Q^\mu)=f-p_\nu\op X(q^\nu)=I.
\end{equation*}
This result is nothing else but the scalar nature of $p_\mu\,\op X(q^\mu)$ inherited from the covariance of the momenta $p_\mu$ and the contravariance of the coordinates $q^\mu$.

\bibliography{Leone_Gourieux}

\end{document}